\newtheorem{definition}{Definition}
\newtheorem{theorem}{Theorem}
\begin{document}
\title{On The Classification-Distortion-Perception Tradeoff}
%
%
\author{Dong Liu, Haochen Zhang, Zhiwei Xiong\\
\emph{University of Science and Technology of China}\\
\texttt{dongeliu@ustc.edu.cn}}
%
%
%
\maketitle              
\begin{abstract}
Signal degradation is ubiquitous and computational restoration of degraded signal has been investigated for many years. Recently, it is reported that the capability of signal restoration is fundamentally limited by the perception-distortion tradeoff, i.e. the distortion and the perceptual difference between the restored signal and the ideal ``original'' signal cannot be made both minimal simultaneously. Distortion corresponds to signal fidelity and perceptual difference corresponds to perceptual naturalness, both of which are important metrics in practice. Besides, there is another dimension worthy of consideration, namely the semantic quality or the utility for recognition purpose, of the restored signal. In this paper, we extend the previous perception-distortion tradeoff to the case of classification-distortion-perception (CDP) tradeoff, where we introduced the classification error rate of the restored signal in addition to distortion and perceptual difference. Two versions of the CDP tradeoff are considered, one using a predefined classifier and the other dealing with the optimal classifier for the restored signal. For both versions, we can rigorously prove the existence of the CDP tradeoff, i.e. the distortion, perceptual difference, and classification error rate cannot be made all minimal simultaneously. Our findings can be useful especially for computer vision researches where some low-level vision tasks (signal restoration) serve for high-level vision tasks (visual understanding).

\textbf{Keywords:} Bayes decision, classification, distortion, error rate, perception, tradeoff.
\end{abstract}

\section{Introduction}
\subsection{Motivation}
Signal degradation refers to the corruption of the signal due to many different reasons such as interference and the blend of interested signal and uninterested signal or noise, which is observed ubiquitously in practical information systems. The cause of signal degradation may be physical factors, such as the imperfectness of data acquisition devices and the noise in data transmission medium; or may be artificial factors, such as the lossy data compression and the transmission of multiple sources over the same medium at the same time. In addition, in cases where we want to enhance signal, we may assume the signal to have been somehow ``degraded,'' for example as we want to enhance the resolution of an image, we assume the image is a degraded version of an ideal ``original'' image that has high resolution \cite{dong2014learning}.

To tackle signal degradation or to fulfill signal enhancement, computational restoration of degraded signal has been investigated for many years. There are various signal restoration tasks corresponding to different degradation reasons. Taken image as example, image denoising \cite{zhang2017beyond}, image deblur \cite{su2017deep}, single image super-resolution \cite{dong2014learning}, image contrast enhancement \cite{gharbi2017deep}, image compression artifact removal \cite{dong2015compression}, image inpainting \cite{yu2018generative}, \dots, all belong to image restoration tasks.

Different restoration tasks have various objectives. Some tasks may be keen to recover the ``original'' signal as faithfully as possible, like image denoising is to recover the noise-free image, compression artifact removal is to recover the uncompressed image. Some other tasks may concern more about the perceptual quality of the restored signal, like image super-resolution is to produce image details to make the enhanced image look like ``high-resolution,'' image inpainting is to generate a complete image that looks ``natural.'' Yet some other tasks may serve for recognition or understanding purpose: for one example, an image containing a car license plate may have blur, and image deblur can achieve a less blurred image so as recognize the license plate \cite{lu2016robust}; for another example, an image taken at night is difficult to identify, and image contrast enhancement can produce a more naturally looking image that is better understood \cite{kuang2017nighttime}. Recent years have witnessed more and more efforts about the last category \cite{shermeyer2018effects,vidalmata2019bridging}.

Given the different objectives, it is apparent that a signal restoration method designed for one specific task shall be evaluated with the specific metric that corresponds to the task's objective. Indeed, the aforementioned objectives correspond to three groups of evaluation metrics:
\begin{enumerate}
  \item \emph{Signal fidelity metrics} that evaluate how similar is the restored signal to the ``original'' signal. These include all the full-reference quality metrics, such as the well-known mean-squared-error (MSE) and its counterpart peak signal-to-noise ratio (PSNR), the structural similarity (SSIM) \cite{wang2004image}, and the difference in features extracted from original signal and restored signal \cite{johnson2016perceptual}, to name a few.
  \item \emph{Perceptual naturalness metrics} that evaluate how ``natural'' is the restored signal with respect to human perception. Perceptual naturalness was evaluated by human and approximated by no-reference quality assessment methods \cite{mittal2012no,saad2012blind}. Recently, the popularity of generative adversarial network (GAN) has motivated a formulation of perceptual naturalness \cite{blau2018perception}.
  \item \emph{Semantic quality metrics} that evaluate how ``useful'' is the restored signal in the sense that it better serves for the following semantic-related analyses. For example, how well a classifier performs on the restored signal is a measure of the semantic quality. There are only a few studies about semantic quality assessment methods \cite{liu2017recognizable}.
\end{enumerate}
It is worth noting that signal fidelity metrics have dominated in the researches of signal restoration. However, is one method optimized for signal fidelity also optimal for perceptual naturalness or semantic quality? This question has been overlooked for a long while until recently. Blau and Michaeli considered signal fidelity and perceptual naturalness and concluded that both metrics cannot be optimized simultaneously \cite{blau2018perception}. Indeed, they provided a rigorous proof of the existence of the perception-distortion tradeoff: with distortion representing signal fidelity and perceptual difference representing perceptual naturalness, one signal restoration method cannot achieve both low distortion and low perceptual difference (up to a bound). This conclusion reveals the fundamental limit of the capability of signal restoration, and inspires the adoption of perceptual naturalness metrics in related tasks \cite{blau20182018,vu2018fast}.

Following the work of the perception-distortion tradeoff, in this paper, we aim to consider the three groups of metrics jointly, i.e. we want to study the relation between signal fidelity, perceptual naturalness, and semantic quality. We consider classification error rate as the representative of semantic quality, because classification is the most fundamental semantic-related analysis. We find there is indeed a tradeoff between the three metrics, which is named the classification-distortion-perception (CDP) tradeoff. In short, the CDP tradeoff claims that the distortion, perceptual difference, and classification error rate cannot be made minimal simultaneously. Our proof indicates the essential difference between the three quality metrics. In practice, it implies the adoption of semantic quality metrics instead of signal fidelity or perceptual naturalness metrics, if a signal restoration method is meant to serve for recognition purpose.
\subsection{Problem Definition}
Consider the process: $X\rightarrow Y\rightarrow \hat{X}$, where $X$ denotes the ideal ``original'' signal, $Y$ denotes the degraded signal, and $\hat{X}$ denotes the restored signal.
We formulate $X$, $Y$, and $\hat{X}$ each as a discrete random variable.
The cases of continuous random variables can be deduced in a similar manner, and thus are omitted hereafter.
The probability mass function of $X$ is denoted by $p_X(x),x\in\mathcal{X}$.
The degradation model is denoted by $P_{Y|X}$, which is characterized by a conditional mass function $p(y|x)$.
The restoration method is then denoted by $P_{\hat{X}|Y}$ and characterized by $p(\hat{x}|y)$.

We are interested in classifying the signal into two categories in this paper. Thus, we assume each sample of the original signal belongs to one of two classes: $\omega_1$ or $\omega_2$. The \emph{a priori} probabilities and the conditional mass functions are assumed to be known as $P_1,P_2$ and $p_{X1}(x),p_{X2}(x)$, respectively. In other words, $X$ follows a two-component mixture model: $p_X(x)=P_1p_{X1}(x)+P_2p_{X2}(x)$. Accordingly, $Y$ follows the model: $p_Y(y)=P_1p_{Y1}(y)+P_2p_{Y2}(y)$, and $\hat{X}$ follows the model: $p_{\hat{X}}(\hat{x})=P_1p_{\hat{X}1}(\hat{x})+P_2p_{\hat{X}2}(\hat{x})$, where
\begin{eqnarray}
  p_{Yi}(y) &=& \sum_{x\in\mathcal{X}}{p(y|x)p_{Xi}(x)}, i=1,2 \\
  \nonumber p_{\hat{X}i}(\hat{x}) &=& \sum_{y\in\mathcal{Y}}{p(\hat{x}|y)p_{Yi}(y)}\\
  &=& \sum_{y}\sum_{x}{p(\hat{x}|y)p(y|x)p_{Xi}(x)}, i=1,2
\end{eqnarray}

A binary classifier can be denoted by
\begin{equation}\label{cx}
  c(t)=c(t|\mathcal{R})=\begin{cases}
         \omega_1, & \mbox{if } t\in\mathcal{R} \\
         \omega_2, & \mbox{otherwise}
       \end{cases}
\end{equation}
If we apply this classifier on the original signal $X$, we shall achieve an error rate
\begin{equation}\label{errorrate}
  \varepsilon(X|c)=\varepsilon(X|\mathcal{R})=P_2\sum_{x\in\mathcal{R}}{p_{X2}(x)}+P_1\sum_{x\notin\mathcal{R}}{p_{X1}(x)}
\end{equation}

The optimal classifier is defined as the classifier that achieves the minimal error rate for a given signal, e.g.
$c_{X}^*=\arg\min_{c}{\varepsilon(X|c)}$.
According to the Bayes decision rule (see \cite{fukunaga1990introduction} for proof), the optimal classifier shall be
\begin{equation}\label{bayes_classifier}
  c_X^*=c(\cdot|\mathcal{R}_X^*),\mbox{where }\mathcal{R}_X^*=\{x|P_1p_{X1}(x)\geq P_2p_{X2}(x)\}
\end{equation}
which leads to the minimal error rate, a.k.a. the Bayes error rate
\begin{equation}\label{bayes_errorrate}
\begin{aligned}
  \epsilon(X)&=\min_{c}{\varepsilon(X|c)}=\varepsilon(X|\mathcal{R}_X^*)\\
            &=\sum_{x}{\min[P_1p_{X1}(x),P_2p_{X2}(x)]} \\
            &=\frac{1}{2} - \frac{1}{2}\sum_{x}{|P_1p_{X1}(x)-P_2p_{X2}(x)|}
\end{aligned}
\end{equation}
\subsection{Main Theorems}
We prove two versions of the CDP tradeoff. For the first version, we consider using a predefined classifier $c_0=c(\cdot|\mathcal{R}_0)$ on the restored signal. This leads to
\begin{definition}
The classification-distortion-perception (CDP) function is
\begin{equation}\label{cdp_function}
  C(D,P)=\min_{P_{\hat{X}|Y}}\varepsilon(\hat{X}|c_0),\mbox{subject to }\mathbb{E}[\Delta(X,\hat{X})]\leq D,d(p_X,p_{\hat{X}})\leq P
\end{equation}
where $\mathbb{E}$ is to take expectation, $\Delta(\cdot,\cdot): \mathcal{X}\times\hat{\mathcal{X}}\rightarrow \mathbb{R}^+$ is a function to measure distortion between the original and the restored signals, and $d(\cdot,\cdot)$ is a function to measure the difference between two probability mass functions, which is claimed to be indicative for perceptual difference \cite{blau2018perception}.
\end{definition}
\begin{theorem}\label{cdp_theorem}
Consider (\ref{cdp_function}), if $d(\cdot,q)$ is convex in $q$, then $C(D,P)$ is
\begin{enumerate}
  \item monotonically non-increasing,
  \item convex in $D$ and $P$.
\end{enumerate}
\end{theorem}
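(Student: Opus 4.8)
The plan is to exploit that the distortion, the classification error rate, and the marginal $p_{\hat X}$ are all \emph{linear} functionals of the restoration kernel $p(\hat x\mid y)$, so that mixing two restoration methods is harmless for everything except possibly the perception constraint, where the convexity hypothesis on $d$ is exactly what saves the argument. I would begin with monotonicity, which requires no hypothesis on $d$. Fix $D_1\le D_2$ and $P_1\le P_2$. Any kernel $p(\hat x\mid y)$ satisfying $\mathbb{E}[\Delta]\le D_1$ and $d(p_X,p_{\hat X})\le P_1$ automatically satisfies the looser constraints $\mathbb{E}[\Delta]\le D_2$ and $d(p_X,p_{\hat X})\le P_2$. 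Hence the feasible set for $(D_2,P_2)$ contains that for $(D_1,P_1)$, and minimizing the same objective over a larger set can only lower the value, giving $C(D_2,P_2)\le C(D_1,P_1)$ and thus the non-increasing property in each argument.

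For convexity, take two feasible pairs $(D_1,P_1)$ and $(D_2,P_2)$ with optimal kernels $p_1(\hat x\mid y)$ and $p_2(\hat x\mid y)$ attaining $C(D_1,P_1)$ and $C(D_2,P_2)$; since the set of kernels is a product of probability simplices (hence compact) and the objective is continuous, these minima are attained. For $\lambda\in[0,1]$ I form the mixed kernel $p_\lambda=\lambda p_1+(1-\lambda)p_2$, which is again a valid conditional mass function. Because $p_{\hat X i}(\hat x)=\sum_y p(\hat x\mid y)\,p_{Yi}(y)$ is linear in the kernel, so are $p_{\hat X}$, the error rate $\varepsilon(\hat X\mid c_0)$ of (\ref{errorrate}), and the distortion $\mathbb{E}[\Delta(X,\hat X)]=\sum_{x,y,\hat x}\Delta(x,\hat x)\,p(\hat x\mid y)\,p(y\mid x)\,p_X(x)$. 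Consequently $\mathbb{E}_\lambda[\Delta]=\lambda\mathbb{E}_1[\Delta]+(1-\lambda)\mathbb{E}_2[\Delta]\le\lambda D_1+(1-\lambda)D_2$ and $\varepsilon_\lambda(\hat X\mid c_0)=\lambda\varepsilon_1+(1-\lambda)\varepsilon_2$, so the distortion budget holds and the objective value mixes linearly.

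The one remaining constraint is the perception term, and this is where the hypothesis enters and is the crux of the argument. By linearity of the marginal, $p_{\hat X}^{\lambda}=\lambda p_{\hat X}^{1}+(1-\lambda)p_{\hat X}^{2}$, and since $d(p_X,\cdot)$ is convex by assumption, $d(p_X,p_{\hat X}^{\lambda})\le\lambda\,d(p_X,p_{\hat X}^{1})+(1-\lambda)\,d(p_X,p_{\hat X}^{2})\le\lambda P_1+(1-\lambda)P_2$. Thus $p_\lambda$ is feasible for the constraint pair $\bigl(\lambda D_1+(1-\lambda)D_2,\ \lambda P_1+(1-\lambda)P_2\bigr)$, so the optimum $C$ at that pair is bounded above by the objective value of $p_\lambda$, namely $\lambda\varepsilon_1+(1-\lambda)\varepsilon_2=\lambda C(D_1,P_1)+(1-\lambda)C(D_2,P_2)$. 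This is precisely joint convexity of $C$ on $\mathbb{R}^2$, which in particular yields convexity in each of $D$ and $P$ separately. I expect the perception constraint to be the main obstacle: distortion and error rate are linear and therefore mix without any assumption, but $d$ need not be linear, so without the stated convexity of $d(p_X,\cdot)$ the mixed kernel could overshoot the perception budget and the feasibility of $p_\lambda$ — and hence the entire convexity conclusion — would fail.
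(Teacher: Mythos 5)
Your proposal is correct and follows essentially the same route as the paper: monotonicity via nestedness of feasible sets, and convexity by mixing the two optimal kernels and exploiting linearity of the error rate and distortion in the kernel together with convexity of $d(p_X,\cdot)$ for the perception constraint. The only cosmetic difference is that you argue feasibility of the mixed kernel at the convex-combination constraint pair directly, while the paper passes through the intermediate point $(D_\lambda,P_\lambda)$ and then invokes monotonicity — the two are logically equivalent.
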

Note that the convexity of the perceptual difference is assumed, which is claimed to be satisfied by a large number of commonly used difference functions, including any f-divergence (e.g. Kullback-Leibler divergence, total variation, Hellinger) and the R{\'e}nyi divergence \cite{csiszar2004information,van2014renyi}.

For the second version, we consider using the optimal classifier on the restored signal, i.e. the classifier is adaptive to the restored signal. According to the Bayes decision rule, we are actually considering the Bayes error rate of $\hat{X}$. This leads to
\begin{definition}
The \textbf{strong} classification-distortion-perception (SCDP) function is
\begin{equation}\label{scdp_function}
  C_S(D,P)=\min_{P_{\hat{X}|Y}}\epsilon(\hat{X}),\mbox{subject to }\mathbb{E}[\Delta(X,\hat{X})]\leq D,d(p_X,p_{\hat{X}})\leq P
\end{equation}
\end{definition}
\begin{theorem}\label{scdp_theorem}
Consider (\ref{scdp_function}), $C_S(D,P)$ is monotonically non-increasing.
\end{theorem}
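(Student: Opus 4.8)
The plan is to establish monotonicity directly from the structure of the constrained minimization, using the elementary principle that relaxing an upper-bound constraint can only decrease the value of a minimization problem. Concretely, for each pair $(D,P)$ let $\mathcal{F}(D,P)=\{P_{\hat{X}|Y}:\mathbb{E}[\Delta(X,\hat{X})]\leq D,\ d(p_X,p_{\hat{X}})\leq P\}$ denote the set of feasible restoration methods, so that $C_S(D,P)=\min_{P_{\hat{X}|Y}\in\mathcal{F}(D,P)}\epsilon(\hat{X})$.

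First I would observe that whenever $D_1\leq D_2$ and $P_1\leq P_2$, every conditional distribution satisfying $\mathbb{E}[\Delta(X,\hat{X})]\leq D_1$ also satisfies $\mathbb{E}[\Delta(X,\hat{X})]\leq D_2$, and likewise for the perceptual constraint; hence $\mathcal{F}(D_1,P_1)\subseteq\mathcal{F}(D_2,P_2)$. Then, since the objective $\epsilon(\hat{X})$ is the same function of $P_{\hat{X}|Y}$ in both problems, minimizing it over the larger set $\mathcal{F}(D_2,P_2)$ can only yield a value no larger than minimizing over $\mathcal{F}(D_1,P_1)$; that is, $C_S(D_2,P_2)\leq C_S(D_1,P_1)$. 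Specializing to $D_1=D_2$ or $P_1=P_2$ gives separate monotonicity in each argument, and adopting the convention that the minimum over an empty feasible set is $+\infty$ keeps the inequality valid even when small $(D,P)$ admit no feasible restoration.

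The point worth flagging is that, unlike Theorem \ref{cdp_theorem}, no convexity can be claimed here, and I would take care not to overreach. The reason is that the Bayes error rate $\epsilon(\hat{X})=\frac{1}{2}-\frac{1}{2}\sum_{\hat{x}}|P_1p_{\hat{X}1}(\hat{x})-P_2p_{\hat{X}2}(\hat{x})|$ from (\ref{bayes_errorrate}) is \emph{concave} in the mixture components $p_{\hat{X}i}$ (a constant minus a convex $\ell_1$-type term), whereas the fixed-classifier error rate underlying Theorem \ref{cdp_theorem} is linear in $p_{\hat{X}}$. Minimizing a concave objective over the convex feasible set $\mathcal{F}(D,P)$ destroys the ``linear-over-convex'' structure that produced convexity before, so only the feasible-set-inclusion argument survives. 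Thus the genuine obstacle is not in proving monotonicity, which is immediate, but in recognizing that this weaker conclusion is the strongest one available; the proof plan accordingly rests solely on constraint relaxation rather than on any variational property of $\epsilon(\hat{X})$.
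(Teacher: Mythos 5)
Your proof is correct and uses exactly the paper's argument: the feasible set of restoration methods only grows as $D$ or $P$ increases, so the minimum of $\epsilon(\hat{X})$ cannot increase. Your additional remark on why convexity fails (the Bayes error rate being concave rather than linear) also matches the paper's own discussion, so nothing is missing or overreached.
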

\subsection{Paper Organization}
In the following sections, we first give some properties of the classification error rate, especially the Bayes error rate, which will be helpful in our proofs of the main theorems. Then we prove the two theorems one by one. Discussion and conclusion are finally presented.
\section{Properties of the Classification Error Rate}
\subsection{Classification Error Rate is Linear}
\begin{theorem}\label{linearity_theorem}
Let $U$ follow a two-component mixture model: $p_U(u)=P_1p_{U1}(u)+P_2p_{U2}(u)$, similarly $V$ follow: $p_V(v)=P_1p_{V1}(v)+P_2p_{V2}(v)$. Let $W$ be the random variable with $p_W(w)=\lambda p_U(w)+(1-\lambda)p_V(w)$ where $0\leq \lambda\leq 1$. Let $c_0$ be a fixed classifier, then
\begin{equation}\label{errorrate_linear}
\varepsilon(W|c_0)=\lambda\varepsilon(U|c_0)+(1-\lambda)\varepsilon(V|c_0)
\end{equation}
\end{theorem}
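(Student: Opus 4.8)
The plan is to reduce everything to the explicit error-rate formula (\ref{errorrate}) and to exploit the fact that $U$, $V$, and $W$ all share the same \emph{a priori} probabilities $P_1,P_2$. The first and most important step is to identify the class-conditional mass functions of $W$, since $\varepsilon(W|c_0)$ depends on these and not merely on the marginal $p_W$. Starting from $p_W(w)=\lambda p_U(w)+(1-\lambda)p_V(w)$ and substituting the two-component expansions of $p_U$ and $p_V$, I would regroup the terms by class to obtain
\begin{equation}
p_W(w)=P_1\big[\lambda p_{U1}(w)+(1-\lambda)p_{V1}(w)\big]+P_2\big[\lambda p_{U2}(w)+(1-\lambda)p_{V2}(w)\big],
\end{equation}
which identifies $p_{W1}=\lambda p_{U1}+(1-\lambda)p_{V1}$ and $p_{W2}=\lambda p_{U2}+(1-\lambda)p_{V2}$ as the class-conditionals of $W$. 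This regrouping is clean precisely because the shared priors factor out; had $U$ and $V$ carried different priors, the decomposition would not be so transparent, so this is the one place where the shared-prior hypothesis is genuinely used.

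With the class-conditionals in hand, the second step is a direct substitution into (\ref{errorrate}) for the fixed decision region $\mathcal{R}_0$ associated with $c_0$. Writing
\begin{equation}
\varepsilon(W|c_0)=P_2\sum_{w\in\mathcal{R}_0}p_{W2}(w)+P_1\sum_{w\notin\mathcal{R}_0}p_{W1}(w),
\end{equation}
I would replace $p_{W1}$ and $p_{W2}$ by their mixture expressions. Because summation is linear and $\lambda,(1-\lambda)$ are constants independent of the summation index, each of the two sums splits into a $\lambda$-weighted part built from $p_{U1},p_{U2}$ and a $(1-\lambda)$-weighted part built from $p_{V1},p_{V2}$.

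The final step is to recognize each regrouped bracket as an instance of (\ref{errorrate}) itself: the terms multiplying $\lambda$ reassemble into $P_2\sum_{w\in\mathcal{R}_0}p_{U2}(w)+P_1\sum_{w\notin\mathcal{R}_0}p_{U1}(w)=\varepsilon(U|c_0)$, and likewise the $(1-\lambda)$ terms give $\varepsilon(V|c_0)$, which yields (\ref{errorrate_linear}) directly. I expect no genuine obstacle: the statement is essentially the observation that, for a fixed classifier, the error rate is a linear functional of the pair of class-conditional densities, so mixing densities with common priors mixes the error rates with the same weights. The only point demanding care is the bookkeeping of the first step—making explicit that the stated marginal mixture forces the corresponding mixture of class-conditionals—since the theorem statement supplies only the marginal $p_W$ and leaves the class structure of $W$ implicit.
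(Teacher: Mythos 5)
Your proposal is correct and follows essentially the same route as the paper's proof: substitute the mixture class-conditionals into the error-rate formula for the fixed region $\mathcal{R}_0$ and regroup by $\lambda$ and $(1-\lambda)$. The one small improvement is that you make explicit the identification $p_{Wi}=\lambda p_{Ui}+(1-\lambda)p_{Vi}$, which the paper's proof uses without comment even though the theorem statement only specifies the marginal $p_W$.
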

\begin{proof}
As $c_0$ is a fixed classifier, it can be denoted in general by $c_0=c(\cdot|\mathcal{R}_0)$. Then we have
\begin{eqnarray}
  \varepsilon(U|c_0)&=&P_2\sum_{u\in\mathcal{R}_0}p_{U2}(u)+P_1\sum_{u\notin\mathcal{R}_0}p_{U1}(u) \\
  \varepsilon(V|c_0)&=&P_2\sum_{v\in\mathcal{R}_0}p_{V2}(v)+P_1\sum_{v\notin\mathcal{R}_0}p_{V1}(v)
\end{eqnarray}
Thus
\begin{equation}
\begin{aligned}
\varepsilon(W|c_0)&=P_2\sum_{w\in\mathcal{R}_0}p_{W2}(w)+P_1\sum_{w\notin\mathcal{R}_0}p_{W1}(w) \\
                  &=P_2\sum_{w\in\mathcal{R}_0}\left[\lambda p_{U2}(w)+(1-\lambda)p_{V2}(w)\right]+P_1\sum_{w\notin\mathcal{R}_0}\left[\lambda p_{U1}(w)+(1-\lambda)p_{V1}(w)\right] \\
                  &=\lambda\left[P_2\sum_{w\in\mathcal{R}_0}p_{U2}(w)+P_1\sum_{w\notin\mathcal{R}_0}p_{U1}(w)\right]
                  +(1-\lambda)\left[P_2\sum_{w\in\mathcal{R}_0}p_{V2}(w)+P_1\sum_{w\notin\mathcal{R}_0}p_{V1}(w)\right]\\
                  &=\lambda\varepsilon(U|c_0)+(1-\lambda)\varepsilon(V|c_0)
\end{aligned}
\end{equation}
\end{proof}
\subsection{Bayes Error Rate is Concave}
\begin{theorem}\label{ber_concave_theorem}
Let $U$, $V$, and $W$ be defined as in Theorem \ref{linearity_theorem}, then
\begin{equation}\label{concave}
\epsilon(W)\geq\lambda\epsilon(U)+(1-\lambda)\epsilon(V)
\end{equation}
\end{theorem}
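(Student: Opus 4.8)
The plan is to exploit the fact, just established in the preceding subsection, that the classification error rate is an affine functional of the underlying mixture, together with the defining property of the Bayes error rate as a pointwise minimum over classifiers. The elementary principle at work is that a pointwise minimum of a family of affine functions is concave; here the ``interpolation variable'' is the mixture $W$ lying between $U$ and $V$, and the affine functions are the error rates $\varepsilon(\cdot\,|c)$ indexed by the fixed classifiers $c$.

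First I would record that $W$ genuinely fits the two-component template of Theorem~\ref{linearity_theorem} with the \emph{same} priors $P_1,P_2$. Setting $p_{Wi}(w)=\lambda p_{Ui}(w)+(1-\lambda)p_{Vi}(w)$ for $i=1,2$, a one-line computation shows $P_1p_{W1}+P_2p_{W2}=\lambda(P_1p_{U1}+P_2p_{U2})+(1-\lambda)(P_1p_{V1}+P_2p_{V2})=\lambda p_U+(1-\lambda)p_V=p_W$, so these are the correct class-conditional components of $W$. This licenses the application of the linearity identity \eqref{errorrate_linear} to \emph{any} fixed classifier acting on $U$, $V$, and $W$ simultaneously.

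Next, let $c_W^*$ denote the Bayes (optimal) classifier for $W$, so that $\epsilon(W)=\varepsilon(W|c_W^*)$ by the definition of the Bayes error rate. Applying \eqref{errorrate_linear} to this particular classifier gives $\epsilon(W)=\lambda\,\varepsilon(U|c_W^*)+(1-\lambda)\,\varepsilon(V|c_W^*)$. I would then invoke suboptimality: when applied to $U$ or to $V$, the classifier $c_W^*$ is merely \emph{some} admissible classifier, hence $\varepsilon(U|c_W^*)\ge\min_c\varepsilon(U|c)=\epsilon(U)$ and likewise $\varepsilon(V|c_W^*)\ge\epsilon(V)$. Combining these with the displayed equality yields $\epsilon(W)\ge\lambda\epsilon(U)+(1-\lambda)\epsilon(V)$, which is exactly \eqref{concave}.

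I do not anticipate a serious obstacle; the argument is short once the right object—the single classifier $c_W^*$ shared across all three variables—is identified. The only point demanding care is conceptual rather than computational: the inequality must come from the fact that the optimizer for $W$ is in general \emph{suboptimal} for $U$ and for $V$, so one must evaluate all three error rates at the common $c_W^*$ (not at their respective own optimizers) before applying linearity. As an alternative route one could work directly from the closed form $\epsilon(X)=\sum_x\min[P_1p_{X1}(x),P_2p_{X2}(x)]$ in \eqref{bayes_errorrate} and apply the concavity of the scalar map $(a,b)\mapsto\min(a,b)$ termwise; this gives the same bound but duplicates reasoning already packaged in Theorem~\ref{linearity_theorem}, so I would prefer the classifier-based argument.
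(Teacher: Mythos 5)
Your argument is correct and is essentially identical to the paper's own proof: evaluate all three error rates at the Bayes classifier $c_W^*$ for $W$, apply the linearity identity \eqref{errorrate_linear}, and then use the suboptimality of $c_W^*$ for $U$ and $V$. Your preliminary check that $W$ inherits the two-component mixture structure with the same priors is a small piece of diligence the paper leaves implicit, but the route is the same.
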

\begin{proof}
$c^*_W$ denotes the optimal classifier for $W$, then $\epsilon(W)=\varepsilon(W|c^*_W)$. According to (\ref{errorrate_linear}) we have $\epsilon(W)=\lambda\varepsilon(U|c^*_W)+(1-\lambda)\varepsilon(V|c^*_W)$. Note that $\epsilon(U)=\min\varepsilon(U|\cdot)$ and $\epsilon(V)=\min\varepsilon(V|\cdot)$. Thus $\epsilon(W)\geq\lambda\epsilon(U)+(1-\lambda)\epsilon(V)$.
\end{proof}
\subsection{Bayes Error Rate is Non-Decreasing}
\begin{theorem}\label{ber_non_decrease}
Let the process of $X\rightarrow Y$ be denoted by $P_{Y|X}$, which is characterized by a conditional mass function $p(y|x)$, then $\epsilon_Y\geq \epsilon_X$ and $\epsilon_Y= \epsilon_X$ if and only if $p(y|x)$ satisfies: $\forall x_1\in\mathcal{R}^+,\forall x_2\in\mathcal{R}^-,\forall y,p(y|x_1)p(y|x_2)=0$, where $\mathcal{R}^+=\{x|P_1p_{X1}(x)> P_2p_{X2}(x)\}$, and $\mathcal{R}^-=\{x|P_1p_{X1}(x)< P_2p_{X2}(x)\}$. Note that $\mathcal{R}^+$ is slightly different from $\mathcal{R}_X^*$ defined in (\ref{bayes_classifier}).
\end{theorem}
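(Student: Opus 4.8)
The plan is to work directly with the closed form of the Bayes error rate from (\ref{bayes_errorrate}), namely $\epsilon(X)=\tfrac12-\tfrac12\sum_x|P_1p_{X1}(x)-P_2p_{X2}(x)|$, and the analogous expression $\epsilon_Y=\tfrac12-\tfrac12\sum_y|P_1p_{Y1}(y)-P_2p_{Y2}(y)|$. Under this reformulation, the desired inequality $\epsilon_Y\geq\epsilon_X$ is equivalent to the reverse inequality on the sums of absolute differences,
\begin{equation}
\sum_y|P_1p_{Y1}(y)-P_2p_{Y2}(y)|\leq\sum_x|P_1p_{X1}(x)-P_2p_{X2}(x)|.
\end{equation}
So the whole theorem reduces to showing that the total ``signed mass'' $g(x):=P_1p_{X1}(x)-P_2p_{X2}(x)$ can only shrink in $L^1$ norm when pushed through the channel $p(y|x)$.

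First I would use the marginalization relations $p_{Yi}(y)=\sum_x p(y|x)p_{Xi}(x)$ to write $P_1p_{Y1}(y)-P_2p_{Y2}(y)=\sum_x p(y|x)\,g(x)$. Then, for each fixed $y$, the triangle inequality gives $\bigl|\sum_x p(y|x)g(x)\bigr|\leq\sum_x p(y|x)|g(x)|$, since $p(y|x)\geq0$. Summing over $y$ and swapping the order of summation yields $\sum_y|P_1p_{Y1}(y)-P_2p_{Y2}(y)|\leq\sum_x|g(x)|\sum_y p(y|x)=\sum_x|g(x)|$, where the last step uses $\sum_y p(y|x)=1$. This establishes $\epsilon_Y\geq\epsilon_X$, and amounts to a data-processing inequality for the variational-distance form of the Bayes error.

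The main obstacle is the equality case, which is where the precise condition on $p(y|x)$ must be extracted. Equality $\epsilon_Y=\epsilon_X$ holds if and only if the triangle inequality is tight for every $y$, i.e. $\bigl|\sum_x p(y|x)g(x)\bigr|=\sum_x p(y|x)|g(x)|$ for all $y$. The standard tightness criterion says this fails exactly when, for some $y$, the summands $p(y|x)g(x)$ carry mixed signs; since $p(y|x)\geq0$, the sign of each summand is that of $g(x)$, which is strictly positive precisely on $\mathcal{R}^+$ and strictly negative precisely on $\mathcal{R}^-$. Hence equality holds iff, for each $y$, there is no pair $x_1\in\mathcal{R}^+$, $x_2\in\mathcal{R}^-$ with both $p(y|x_1)>0$ and $p(y|x_2)>0$; equivalently $p(y|x_1)p(y|x_2)=0$ for all $x_1\in\mathcal{R}^+,x_2\in\mathcal{R}^-,y$, which is exactly the stated condition. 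I expect the delicate point to be handling the sign bookkeeping cleanly — in particular verifying that the boundary set $\{x:g(x)=0\}$ (the discrepancy between $\mathcal{R}^+$ and $\mathcal{R}_X^*$) contributes zero and therefore plays no role in the equality characterization, so that only the strict sets $\mathcal{R}^+$ and $\mathcal{R}^-$ appear in the final condition.
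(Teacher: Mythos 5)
Your proposal is correct and follows essentially the same route as the paper's own proof: rewrite $\epsilon_Y$ via the total-variation form, apply the triangle inequality to $\sum_x p(y|x)[P_1p_{X1}(x)-P_2p_{X2}(x)]$ for each $y$, swap summation order using $\sum_y p(y|x)=1$, and characterize equality by the no-mixed-signs condition, with the zero set $\mathcal{R}^0$ playing no role. No gaps.
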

\begin{proof}
\begin{equation}
\begin{aligned}
  \epsilon_Y&=\sum_{y}{\min[P_1p_{Y1}(y),P_2p_{Y2}(y)]} \\
            &=\frac{1}{2} - \frac{1}{2}\sum_{y}{|P_1p_{Y1}(y)-P_2p_{Y2}(y)|} \\
            &=\frac{1}{2} - \frac{1}{2}\sum_{y}{\left|P_1\sum_{x}{p(y|x)p_{X1}(x)}-P_2\sum_{x}{p(y|x)p_{X2}(x)}\right|}\\
            &= \frac{1}{2} - \frac{1}{2}\sum_{y}{\left|\sum_{x}p(y|x)[P_1{p_{X1}(x)}-P_2{p_{X2}(x)}]\right|}\\
            &\geq \frac{1}{2} - \frac{1}{2}\sum_{y}{\sum_{x}p(y|x)|P_1{p_{X1}(x)}-P_2{p_{X2}(x)}|}\\
            &=\frac{1}{2} - \frac{1}{2}\sum_{x}{|P_1{p_{X1}(x)}-P_2{p_{X2}(x)}|\sum_{y}p(y|x)}\\
            &=\frac{1}{2} - \frac{1}{2}\sum_{x}{|P_1{p_{X1}(x)}-P_2{p_{X2}(x)}|}=\epsilon_X
\end{aligned}
\end{equation}
When $\epsilon_Y=\epsilon_X$, for any $y$, we need to have
\begin{equation}
\left|\sum_{x}p(y|x)[P_1{p_{X1}(x)}-P_2{p_{X2}(x)}]\right|=\sum_{x}p(y|x)|P_1{p_{X1}(x)}-P_2{p_{X2}(x)}|
\end{equation}
which is equivalent to: all the $x$'s that satisfy $p(y|x)\neq 0$ shall have either $P_1{p_{X1}(x)}-P_2{p_{X2}(x)}\geq 0$ or $P_1{p_{X1}(x)}-P_2{p_{X2}(x)}\leq 0$. The condition is further equivalent to: the $x$'s that satisfy $p(y|x)\neq 0$ shall be either all in $\mathcal{R}^+\cup\mathcal{R}^0$, or all in $\mathcal{R}^-\cup\mathcal{R}^0$, where $\mathcal{R}^0=\{x|P_1p_{X1}(x)=P_2p_{X2}(x)\}$. In other words, $\forall x_1\in\mathcal{R}^+,\forall x_2\in\mathcal{R}^-,p(y|x_1)p(y|x_2)=0$.
\end{proof}

We can compare Theorem \ref{ber_non_decrease} with the data processing theorem in the information theory: consider the process of $X\rightarrow Y$ as a deterministic function $Y=f(X)$, then $I(X;Y)\leq H(X)$, and $I(X;Y)=H(X)$ if and only if $f$ is invertible \cite{cover2012elements}. That says, the information quantity we have about the source $X$ is non-increasing after data processing. Similarly, Theorem \ref{ber_non_decrease} claims that the Bayes error rate is non-decreasing after data processing, \emph{because we lose information, at best not}. Moreover, not only invertible function satisfies the condition required in Theorem \ref{ber_non_decrease}, but also a large group of non-invertible functions as well as probabilistic mappings satisfy the condition, which is quite different from the data processing theorem. In other words, \emph{we may lose information but that information loss may not affect classification}.

\section{Proof of the CDP Tradeoff (Theorem \ref{cdp_theorem})}
\begin{proof}
  For the first point, simply note that when increasing $D$ or $P$, the feasible domain of $P_{\hat{X}|Y}$ is enlarged; as $C(D,P)$ is the minimal value of $\varepsilon(\hat{X}|c_0)$ over the feasible domain, and the feasible domain is enlarged, the minimal value will not increase.

  For the second point, it is equivalent to prove:
  \begin{equation}\label{cdp_convex}
    \lambda C(D_1,P_1)+(1-\lambda)C(D_2,P_2)\geq C(\lambda D_1+(1-\lambda)D_2,\lambda P_1+(1-\lambda)P_2)
  \end{equation}
  for any $\lambda\in[0,1]$. First, let $\mu(\hat{x}|y)$ (resp. $\nu(\hat{x}|y)$) denote the optimal restoration method under constraint $(D_1,P_1)$ (resp. $(D_2,P_2)$), and $\hat{X}_{\mu}$ (resp. $\hat{X}_{\nu}$) be the restored signal, i.e.
  \begin{equation}\label{muxy}
    \varepsilon(\hat{X}_{\mu}|c_0)=\min_{P_{\hat{X}|Y}}\varepsilon(\hat{X}|c_0),\mbox{subject to }\mathbb{E}[\Delta(X,\hat{X})]\leq D_1,d(p_X,p_{\hat{X}})\leq P_1
  \end{equation}
  \begin{equation}\label{nuxy}
    \varepsilon(\hat{X}_{\nu}|c_0)=\min_{P_{\hat{X}|Y}}\varepsilon(\hat{X}|c_0),\mbox{subject to }\mathbb{E}[\Delta(X,\hat{X})]\leq D_2,d(p_X,p_{\hat{X}})\leq P_2
  \end{equation}
  Then the left hand side of (\ref{cdp_convex}) becomes
  \begin{equation}\label{cdp_convex_left}
    \lambda\varepsilon(\hat{X}_{\mu}|c_0)+(1-\lambda)\varepsilon(\hat{X}_{\nu}|c_0)=\varepsilon(\hat{X}_{\lambda}|c_0)
  \end{equation}
  where we have used Theorem \ref{linearity_theorem} and $\hat{X}_{\lambda}$ denotes the restored signal corresponding to $p_{\lambda}(\hat{x}|y)=\lambda\mu(\hat{x}|y)+(1-\lambda)\nu(\hat{x}|y)$. Let $D_{\lambda}=\mathbb{E}[\Delta(X,\hat{X}_{\lambda})]$, $P_{\lambda}=d(p_X,p_{\hat{X}_{\lambda}})$, then by definition
  \begin{equation}\label{cdp_convex_by_def}
    \varepsilon(\hat{X}_{\lambda}|c_0)\geq C(D_{\lambda},P_{\lambda})
  \end{equation}
  Next, as $d(\cdot,\cdot)$ in (\ref{cdp_function}) is convex in its second argument, we have
  \begin{equation}
  \begin{aligned}
    P_{\lambda}&=d(p_X,\lambda p_{\hat{X}_{\mu}}+(1-\lambda) p_{\hat{X}_{\nu}})\\
               &\leq \lambda d(p_X,p_{\hat{X}_{\mu}})+(1-\lambda) d(p_X,p_{\hat{X}_{\nu}})\\
               &\leq \lambda P_1+(1-\lambda) P_2
  \end{aligned}
  \end{equation}
  the last inequality is due to (\ref{muxy}) and (\ref{nuxy}). Similarly, we have
  \begin{equation}
  \begin{aligned}
  D_{\lambda}&=\mathbb{E}[\Delta(X,\hat{X}_{\lambda})]\\
             &=\mathbb{E}_{Y}\mathbb{E}[\Delta(X,\hat{X}_{\lambda})|Y]\\
             &=\mathbb{E}_{Y}[\lambda\mathbb{E}[\Delta(X,\hat{X}_{\mu})|Y]+(1-\lambda)\mathbb{E}[\Delta(X,\hat{X}_{\nu})|Y]]\\
             &=\lambda\mathbb{E}[\Delta(X,\hat{X}_{\mu})]+(1-\lambda)\mathbb{E}[\Delta(X,\hat{X}_{\nu})]\\
             &\leq \lambda D_1+(1-\lambda) D_2
  \end{aligned}
  \end{equation}
  the last inequality is again due to (\ref{muxy}) and (\ref{nuxy}). Finally, note that $C(D,P)$ is non-increasing with respect to $D$ and $P$,
  \begin{equation}\label{cdp_convex_by_dp}
    C(D_{\lambda},P_{\lambda})\geq C(\lambda D_1+(1-\lambda) D_2,\lambda P_1+(1-\lambda) P_2)
  \end{equation}
  Combining (\ref{cdp_convex_left}), (\ref{cdp_convex_by_def}), and (\ref{cdp_convex_by_dp}), we have (\ref{cdp_convex}).
\end{proof}

\section{Proof of the CDP Tradeoff (Theorem \ref{scdp_theorem})}
\begin{proof}
Simply note that when increasing $D$ or $P$, the feasible domain of $P_{\hat{X}|Y}$ is enlarged; as $C_S(D,P)$ is the minimal value of $\epsilon(\hat{X})$ over the feasible domain, and the feasible domain is enlarged, the minimal value will not increase.
\end{proof}
\section{Discussion and Conclusion}
We would like to mention the difference between Theorem \ref{cdp_theorem} and Theorem \ref{scdp_theorem}. Theorem \ref{scdp_theorem} is more fundamental as we deal with the theoretically minimal error rate of the restored signal. However in practice, this error rate is not achievable if the degradation model is \emph{unknown}. Clearly, if $p(y|x)$ is not available, we cannot make any meaningful conclusion regarding the mass function $p_Y(y)$, which prohibits the search for the optimal restoration method together with the optimal classifier. From a practical perspective, we usually adopt a fixed classifier (for example the classifier trained by some samples of the original signal) and adjust the restoration method only. On the other hand, if the degradation model is \emph{known}, then it is possible to consider the optimal classifier for the degraded signal $Y$ directly: actually it is better in theory to consider $Y$ instead of $\hat{X}$ because we have confirmed that $\epsilon(\hat{X})\geq \epsilon(Y)$ (Theorem \ref{ber_non_decrease}). In other words, signal restoration has no use to improve the classification accuracy as long as the degradation model is known. According to these analyses, Theorem \ref{scdp_theorem} is less appealing in practice.

Note that we do not prove the convexity of the strong CDP function, as we have done for the CDP function in Theorem \ref{cdp_theorem}. This is due to the essential difference between classification error rate with a fixed classifier and Bayes error rate: the former is linear and the latter is concave (Theorems \ref{linearity_theorem} and \ref{ber_concave_theorem}). Note that the distortion is also linear but the perceptual difference is convex. We suspect the strong CDP function may be not convex, which is to be confirmed in the future.

Our findings can be useful especially for computer vision researches where some
low-level vision tasks (signal restoration) serve for high-level vision tasks (visual understanding). If the degradation model is known, we recommend directly classifying the degraded signal without any restoration; at this time the classifier can be trained by samples that are simulated with the known degradation on the original signal. If the degradation model is unknown, we recommend using a fixed classifier, which can be trained for example using samples of the original signal; meanwhile, we recommend searching for the restoration method with the classification error rate as the objective (or one of the objectives) of optimization. This strategy is clearly different from previous works that optimize for various kinds of distortion metrics for improving the classification performance.

 
\end{document}